\newtheorem{theorem}{Theorem}
\newtheorem{example}{Example}
\newtheorem{lemma}{Lemma}
\newtheorem{corollary}{Corollary}
\newtheorem{remark}{Remark}
\theoremstyle{definition}
\renewcommand{\qed}{\hfill $\blacksquare$}
\newenvironment{sproof}{\noindent{ \emph{ Sketch of proof:}}}{\qed\bigskip}
\begin{document}
	\newgeometry{left=0.625in,right=0.625in,top=0.7in,bottom=0.99in}
	
    \title{Information-Theoretic Fairness with A Bounded Statistical Parity Constraint} 
\vspace{-5mm}
\author{
		\IEEEauthorblockN{ \vspace*{0.5em}
			\IEEEauthorblockA{Amirreza Zamani, Abolfazl Changizi, Ragnar Thobaben, Mikael Skoglund\\
                              Division of Information Science and Engineering, KTH Royal Institute of Technology \\
				Email: amizam@kth.se,
                changizi@kth.se, ragnart@kth.se, skoglund@kth.se \protect }}
		}
	\maketitle

\begin{abstract}
    In this paper, we study an information-theoretic problem of designing a fair representation that attains bounded statistical (demographic) parity. More specifically, an agent uses some useful data $X$ to solve a task $T$. Since both $X$ and $T$ are correlated with some sensitive attribute or secret $S$, the agent designs a representation $Y$ that satisfies a bounded statistical parity and/or privacy leakage constraint, that is, such that $I(Y;S) \leq \epsilon$. Here, we relax the perfect demographic (statistical) parity and consider a bounded-parity constraint. 
    In this work, we design the representation $Y$ that maximizes the mutual information $I(Y;T)$ about the task while satisfying a bounded compression (or encoding rate) constraint, that is, ensuring that $I(Y;X) \leq r$. Simultaneously, $Y$ satisfies the bounded statistical parity constraint $I(Y;S) \leq \epsilon$.  

    To design $Y$, we use extended versions of the Functional Representation Lemma and the Strong Functional Representation Lemma which are based on randomization techniques and study the tightness of the obtained bounds in special cases. The main idea to derive the lower bounds is to use randomization over useful data $X$ or sensitive data $S$. 
Considering perfect demographic parity, i.e., $\epsilon=0$, we improve the existing results (lower bounds) by using a tighter version of the Strong Functional Representation Lemma and propose new upper bounds. We then propose upper and lower bounds for the main problem and show that allowing non-zero leakage can improve the attained utility. Finally, we study the bounds and compare them in a numerical example.

    The problem studied in this paper can also be interpreted as one of code design with bounded leakage and bounded rate privacy considering the sensitive attribute as a secret.

\end{abstract}
\section{Introduction}
In this paper, we consider the scenario illustrated in~\Cref{fig:ISITsys}, where we want to use some observable useful data $X$ to draw inferences or make some decision about a certain task $T$. Here, we assume that both the useful data and the task are arbitrarily correlated with some sensitive attribute or secret $S$. For example, the data  $X$ could represent a person's resume, the task $T$ could determine whether this person should be employed in a position or not, and the sensitive attribute $S$ could correspond to the person's gender or disabilities.
 
\begin{figure}[t]
    \centering
    \includegraphics[width=0.4\textwidth]{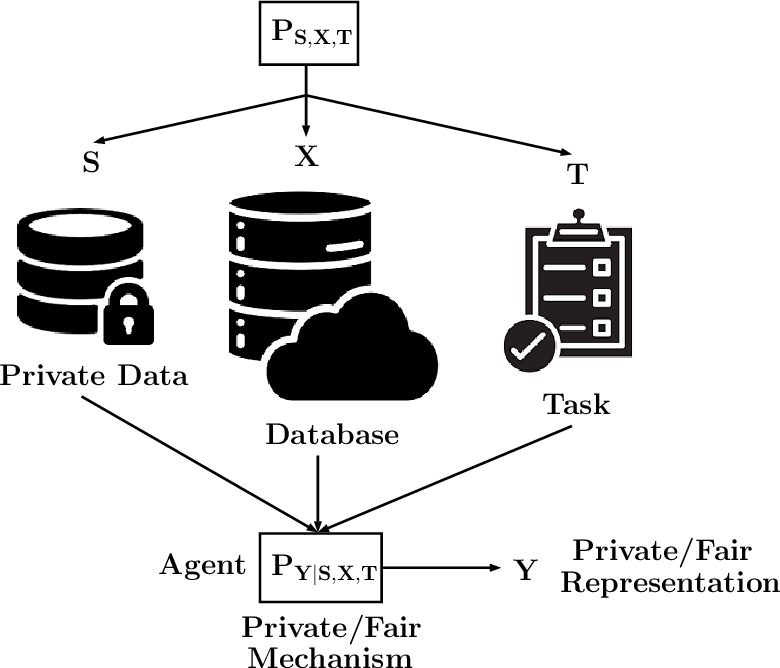}
    \caption{\looseness=-1 Data representation with a bounded statistical (demographic) parity or privacy leakage constraint. The goal is to design a representation $Y$ of the data $X$ that is useful for the task $T$, is compressed, and leaks within a controlled threshold of the sensitive attribute or secret $S$.}
    \label{fig:ISITsys}
\end{figure}
\looseness=-1 As outlined in \cite{AmirITW2024,vari}, it is important to ensure that decisions are not unfair and/or that inferences do not violate privacy leakage constraints. To this end, we can design a \emph{private} or \emph{fair} representation $Y$ of the data, that is, representations that contain the maximum possible information about the task $T$ while leaking no more than an acceptable threshold about the sensitive attribute or secret 
 $S$~\cite{AmirITW2024,gun, vari,zhao2022, zhao2019,zemel, gronowski2023classification,hardt, king3, borz, khodam, Khodam22,Yanina1,makhdoumi, yamamoto1988rate, sankar, Calmon2,fairAsoodeh}. As discussed in~\cite{vari,AmirITW2024,fairAsoodeh}, privacy and fairness problems are very similar, and their intersections can be utilized in broad areas such as classification algorithms. More specifically, in \cite{AmirITW2024}, the design of representations $Y$ that contain no information about the sensitive attribute or secret $S$, i.e., $I(Y;S) = 0$, has been considered. 
In the \emph{information-theoretic privacy} literature, the independence of $Y$ and $S$ is known as \emph{perfect privacy} or \emph{perfect secrecy}~\cite{borz, Yanina1}.
In the \emph{algorithmic fairness} literature, the independence of the algorithm's output $Y$ and the sensitive attribute $S$ is known as \emph{perfect demographic (or statistical) parity} \cite{vari,zhao2022,zhao2019,zemel}. 

As outlined in \cite{shannon,khodam,Yanina1,fairAsoodeh,king3}, there are many cases where perfect privacy (secrecy) or perfect demographic (statistical) parity is not achievable and we need to relax the restriction. Specifically, there are applications where perfect privacy (secrecy) leads to zero utility while relaxing it and allowing a small leakage can result in a significant gain \cite{khodam}; see, for example, Example 1. Therefore, in this paper, we relax the perfect demographic parity constraint in \cite{AmirITW2024}, and consider a bounded statistical parity condition.
This relaxation unlocks new design strategies, which lead to improved utilities and are not applicable under the perfect demographic parity or perfect privacy constraint (e.g., see Remark 6).

\looseness=-1 Under the outlined requirements on the fairness and/or privacy of the representations, we consider a problem that solves a trade-off between utility, compression or encoding rate and statistical parity:
    To ensure that the representation $Y$ has as much impact as possible, we maximize the information it contains about the task $T$. Moreover, we impose a minimum level of compression $r$ to the data representation, that is, $I(Y;X) \leq r$. Finally, the representation $Y$ should satisfy the bounded statistical parity constraint $I(S;Y)\leq \epsilon$. This optimization problem is described in~\eqref{eq:problem_1}.
    %


\subsection{Prior work on fairness mechanism design}

The concept of \emph{fair representations} was introduced by Zemel et al.~\cite{zemel}, marking a significant advancement in the field of algorithmic fairness through the expressive capabilities of deep learning. Subsequent research has been predominantly characterized by adversarial~\cite{zemel, edwards2016censoring, zhao2019} and variational~\cite{vari, creager2019flexibly, louizos2015variational, gupta2021controllable} methodologies. The theoretical exploration of the trade-offs between utility and fairness is further elaborated in~\cite{zhao2022}.

In \cite{vari}, the authors study an information-theoretic fairness problem. They introduce the CFB (Conditional Fairness Bottleneck), which quantifies the trades-off between the utility, fairness, and \emph{compression} of the representations in terms of mutual information. This additional criterion has since been used in subsequent studies~\cite{gupta2021controllable, de2022funck, gronowski2023classification}.~\cite{gupta2021controllable, de2022funck, gronowski2023classification}. More specifically, in \cite{vari}, the representation design is based on a variational approach. In contrast with \cite{vari}, in \cite{AmirITW2024}, the authors propose a simple and constructive theoretical approach to design fair representations with perfect demographic parity and/or private representations with perfect privacy. They provide upper and lower bounds on a trade-off between the utility, fairness, and compression rate of the representation. To attain the lower bounds, they use randomization techniques and extended versions of \emph{Functional Representation Lemma} (FRL) and \emph{Strong Functional Representation Lemma} (SFRL) derived in \cite{king3}. Furthermore, they have shown that the lower bounds can be tight under specific assumptions. In \cite{fairAsoodeh}, a binary classification problem subject to both differential privacy and fairness constraints is studied.

\subsection{Prior work on privacy mechanism design}

In \cite{borz}, the authors study the trade-off between privacy and utility in terms of mutual information. They show that the optimal privacy mechanism under perfect privacy can be obtained as the solution to a linear problem. In \cite{khodam, Khodam22}, this is generalized by relaxing the perfect privacy assumption and allowing some small bounded privacy leakage. In both \cite{khodam} and \cite{Khodam22}, point-wise (per-letter) measures have been used in the leakage constraint. Furthermore, the privacy design is based on the information geometry concept which allows them to study the problem geometrically.
The notion of the \emph{Privacy Funnel} is introduced in~\cite{makhdoumi}, where the trade-off between privacy and utility considers the log-loss as the privacy and distortion metrics. 

In \cite{Calmon2}, the authors studied the fundamental limits of the privacy and utility trade-off under an estimation-theoretic approach. In \cite{yamamoto, sankar}, the trade-off between privacy and utility is studied considering the equivocation as the privacy measure and the expected distortion as a utility. In \cite{Yanina1}, the authors introduce the notion of \emph{secrecy by design} and apply it to privacy mechanism design and lossless compression problems. They find bounds on the trade-off between perfect privacy and utility using the \emph{Functional Representation Lemma} (FRL). In \cite{king3}, these results are generalized to an arbitrary privacy leakage. To attain privacy mechanism, extensions of the FRL and SFRL have been used. In \cite{zamani2025variable}, a compression design problem under the secrecy constraints is studied and bounds on the average length of the encoded message are derived. In \cite{9457633}, fundamental limits of private data disclosure are studied, where the goal is to minimize leakage under utility constraints with non-specific tasks. This result is generalized in \cite{zamani2023multi}. The concept of lift is studied
in \cite{zarab2} which represents the likelihood ratio between the posterior and prior beliefs
concerning sensitive features within a data set. 
A privacy mechanism design in a cache-aided network has been studied in \cite{amircache}.
 
\subsection{Contributions}

In this paper, we propose a simple and constructive theoretical approach to design fair representations with bounded statistical parity and/or private representations with bounded privacy leakage.
We study the problem of finding a trade-off between utility, fairness and/or privacy and compression, which is outlined earlier.

\looseness=-1 The design of the mechanism yielding the representations is based on the extensions of the FRL and the SFRL derived in~\cite{king3}. As outlined in~\cite{king3}, the main idea for extending FRL~\cite{Yanina1} and SFRL~\cite{cheuk} is to use a randomization technique, which is also employed in~\cite{warner1965randomized}. 
Furthermore, it is shown that both lemmas are constructive and simple, which helps in obtaining mechanism designs that can be optimal in special cases. We follow a similar approach as used in \cite{king3} and \cite{Yanina1} to obtain simple mechanisms for fair and private representations. 
We emphasize that to extend the SFRL, in this paper, we use a tighter version of it which is introduced in \cite{new_functional}.
Finally, we compare the obtained designs in different scenarios and show that they are optimal in special cases. \\
In more details, we can summarize our contribution as follows
\begin{itemize}
\item In Section \ref{sec:background}, we provide a background on the important lemmas such as the FRL, SFRL, and their extended versions. Furthermore, we provide an overview of the previous fairness problem studied in \cite{AmirITW2024}. 
    \item In Section \ref{sec:system}, we introduce the fair/private representation design problem with a bounded statistical (demographic) parity. As outlined earlier, we relax the perfect demographic constraint in \cite{AmirITW2024} and consider a bounded condition.
    \item In Section \ref{sec:resul}, we first improve the previous results in \cite{AmirITW2024} by using a tighter version of the SFRL \cite{new_functional} and also provide new upper bounds. Next, we study the problem with a bounded constraint. We provide upper and lower bounds on the trade-off and analyze their tightness in special cases. 
    We show that relaxing either the zero-leakage or the perfect statistical parity constraint leads to new design strategies that were previously inapplicable.
    Finally, in a numerical example, we compare the obtained bounds.\\
    The paper is concluded in Section V.
\end{itemize}

\section{Preliminaries and Background}
\label{sec:background}
In this section, we first recall the FRL \cite[Lemma~1]{Yanina1}, a tighter version of the SFRL \cite[Theorem 16]{new_functional} and the extended version FRL \cite{king3}. We then extend the new version of SFRL using \cite{king3}.
\begin{lemma}\label{lemma1} (FRL \cite[Lemma~1]{Yanina1}):
 	For any pair of RVs $(X,Y)$, there exists a RV $U$ supported on $\mathcal{U}$ such that $X$ and $U$ are independent, i.e., we have
 	$
 	I(U;X)=0,
 $
 	$Y$ is a deterministic function of $(U,X)$, i.e., we have
 	$
 	H(Y|U,X)=0,\label{c2}
 	$
 	and 
 	$
 	|\mathcal{U}|\leq |\mathcal{X}|(|\mathcal{Y}|-1)+1.
 	$
 \end{lemma}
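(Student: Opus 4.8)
The plan is to prove \Cref{lemma1} by an explicit coupling construction based on the inverse‑CDF (quantile) transform, using a \emph{single} shared source of randomness to realize all of the conditional laws $P_{Y|X=x}$ at once. First I would introduce an auxiliary variable $V\sim\mathrm{Unif}[0,1]$ drawn independently of $X$. Fixing an arbitrary ordering $\mathcal{Y}=\{y_1,\dots,y_m\}$ with $m=|\mathcal{Y}|$, for each $x\in\mathcal{X}$ I partition $[0,1]$ into the $m$ consecutive half‑open intervals $[F_{j-1}(x),F_j(x))$ of lengths $P_{Y|X=x}(y_j)$, where $F_j(x)=\sum_{k\le j}P_{Y|X=x}(y_k)$ and $F_0(x)=0$, and I set $f_x(v)=y_j$ whenever $v\in[F_{j-1}(x),F_j(x))$. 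Then $f_x(V)\sim P_{Y|X=x}$, so defining $Y=f_X(V)$ reproduces the prescribed joint law $P_{XY}$ while making $Y$ a deterministic function of $(X,V)$.

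The second step is the discretization that produces a finite alphabet for $U$ together with the claimed cardinality bound. The partition associated with a given $x$ is determined by its at most $m-1$ interior breakpoints $\{F_j(x):1\le j\le m-1\}\cap(0,1)$; collecting these breakpoints over all $x\in\mathcal{X}$ yields at most $|\mathcal{X}|(|\mathcal{Y}|-1)$ points, which cut $[0,1]$ into at most $|\mathcal{X}|(|\mathcal{Y}|-1)+1$ subintervals $I_1,\dots,I_L$. I would then set $U=\ell$ iff $V\in I_\ell$. By construction $U$ is a function of $V$ alone, hence $I(U;X)=0$; and since no $I_\ell$ straddles a breakpoint of the $x$‑partition, each map $v\mapsto f_x(v)$ is constant on $I_\ell$, so $Y=f_X(V)$ is in fact a deterministic function of $(U,X)$, i.e.\ $H(Y\mid U,X)=0$; finally $|\mathcal{U}|=L\le|\mathcal{X}|(|\mathcal{Y}|-1)+1$, and by the computation above $(X,Y)$ still has the original joint distribution.

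The bookkeeping that needs care — and essentially the only place I expect friction — is the degenerate‑case handling: breakpoints that coincide across different $x$'s (this only decreases $L$, so the bound still holds), zero‑probability symbols $P_{Y|X=x}(y_j)=0$ (empty intervals, which are simply discarded), and the measure‑zero ambiguity at interval endpoints (resolved by fixing the half‑open convention once and for all). None of these affects the three displayed conclusions, so after dispatching them the proof is complete. I would also note in passing that this partition construction is precisely what makes the lemma ``constructive'': the distribution $P_U$ and the decoding maps $h_x$ with $h_x(U)=f_x(V)$ are read off directly from the conditionals $P_{Y|X=x}$, which is the feature exploited later when extending the lemma via the randomization technique.
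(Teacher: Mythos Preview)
The paper does not supply its own proof of \Cref{lemma1}; it is stated as a known result with a citation to \cite{Yanina1}. Your inverse-CDF coupling construction is correct and is exactly the standard argument used in that reference (and in the subsequent SFRL literature): generate $V\sim\mathrm{Unif}[0,1]$ independent of $X$, recover $Y$ via the conditional quantile functions $f_x$, and then discretize $V$ according to the common refinement of all the $x$-partitions to obtain a finite $U$ with the stated cardinality bound. Your handling of the edge cases (coinciding breakpoints, zero-mass symbols, endpoint conventions) is also the standard one and raises no issues.
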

  \begin{lemma}\label{lemma2} (A Tighter SFRL \cite[Theorem~16]{new_functional}):
 	For any pair of RVs $(X,Y)$, there exists a RV $U$ supported on $\mathcal{U}$ such that $X$ and $U$ are independent, i.e., we have
 	$
 	I(U;X)=0,
 	$
 	$Y$ is a deterministic function of $(U,X)$, i.e., we have 
 	$
 	H(Y|U,X)=0,
 	$ and
 	$I(X;U|Y)$ can be upper bounded as follows
 	$
 	I(X;U|Y)\leq \log(I(X;Y)+5.51)+1.06.
 	$
 \end{lemma}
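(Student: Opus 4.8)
The plan is to prove \Cref{lemma2} via the \emph{Poisson functional representation}, the randomized construction underlying the strong functional representation lemma; the FRL (\Cref{lemma1}) is of no direct help here, since it controls only $|\mathcal U|$ and not $I(X;U|Y)$, so $U$ is built from scratch. Let $P_Y$ be the marginal of $Y$, and for each $x$ write $r_x=\mathrm dP_{Y|X=x}/\mathrm dP_Y$ for the likelihood ratio. Take a rate-one Poisson process with arrival times $0<T_1<T_2<\cdots$ and i.i.d.\ marks $Z_1,Z_2,\dots\sim P_Y$, all independent of $X$, and set $U=\{(T_i,Z_i)\}_{i\ge1}$, so that $I(U;X)=0$ holds by construction. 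Given $X=x$, define the selected index $K=\arg\min_i T_i/r_x(Z_i)$ and output $Y=Z_K$; then $Y$ is a deterministic function of $(X,U)$, i.e.\ $H(Y\mid U,X)=0$. The one distributional fact to verify is that this reproduces the prescribed joint law, that is, $Y\mid\{X=x\}\sim P_{Y|X=x}$: this follows from the mapping theorem for Poisson processes applied to $(t,z)\mapsto t/r_x(z)$, under which the intensity $\mathrm dt\,\mathrm dP_Y(z)$ is pushed forward to the rate-one intensity on $(0,\infty)$; in particular $W_K:=T_K/r_x(Z_K)\sim\mathrm{Exp}(1)$, independently of the minimizing mark $Z_K\sim P_{Y|X=x}$.

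With the construction in place, everything reduces to bounding $I(X;U|Y)$. I would start from the identity $I(X;U|Y)=H(Y|U)-I(X;Y)$, which holds whenever $I(U;X)=0$ and $H(Y|U,X)=0$, so that it suffices to show $H(Y|U)\le I(X;Y)+\log(I(X;Y)+5.51)+1.06$. Since $Y=Z_K$ is a function of $K$ once $U$ is known, $H(Y|U)\le H(K|U)\le H(K)$, and it is enough to prove that $H(K)$ exceeds $I(X;Y)$ by at most a logarithmic overhead. The quantitative core is a first-moment estimate $\mathbb E[\log K]\le I(X;Y)+c_1$ with $c_1$ a universal constant: a restriction property of the Poisson process gives $\mathbb E[K\mid W_K,Z_K,X]\le T_K+1=W_K\,r_X(Z_K)+1$, so Jensen yields $\mathbb E[\log K\mid X=x]\le\mathbb E[\log(W_K\,r_X(Z_K)+1)\mid X=x]$; then $\mathbb E[\log W_K]=-\gamma$, $\mathbb E[\log r_X(Z_K)\mid X=x]=D(P_{Y|X=x}\|P_Y)$ (the one place where $Z_K\sim P_{Y|X=x}$ enters), and the leftover term $\mathbb E[\log(1+1/T_K)]$ is bounded by a universal constant, which after averaging over $x$ gives the claim. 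Finally I would convert this into an entropy bound through the maximum-entropy principle for positive-integer-valued random variables under a constraint on $\mathbb E[\log K]$ (the extremal law being a zeta distribution): for every $\lambda>1$, $H(K)\le\lambda\,\mathbb E[\log K]+\log\zeta(\lambda)$, and taking $\lambda$ slightly above $1$ gives $H(K)\le\mathbb E[\log K]+\log(\mathbb E[\log K]+c_2)+c_3$. Chaining the three estimates and cancelling the common $I(X;Y)$ term produces $I(X;U|Y)\le\log(I(X;Y)+c_1+c_2)+(c_1+c_3)$; the content of \cite[Theorem~16]{new_functional} is precisely that a jointly optimized choice of the free parameters above drives these constants down to $5.51$ and $1.06$, sharpening the additive $+4$ of the original SFRL \cite{cheuk}.

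I expect the main obstacle to be carrying out that constant optimization \emph{simultaneously} across the three inequalities: each carries exploitable slack --- the Poisson rate, equivalently the scale of $W_K$; the additive term in $\mathbb E[K\mid W_K,Z_K,X]\le T_K+1$; and the choice of $\lambda$ in the maximum-entropy step --- and the sharp constants appear only from trading these against one another, with extra care required in the small-$I(X;Y)$ regime so that the bound stays valid (and nonnegative) there. A secondary and more routine difficulty is the measure-theoretic bookkeeping: justifying the pushforward and restriction properties of the Poisson process that the construction rests on, and verifying that $\mathbb E[\log(1+1/T_K)]$, $\mathbb E[\log W_K]$, and the zeta-comparison expectations are all finite whenever $I(X;Y)<\infty$.
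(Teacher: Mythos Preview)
The paper does not prove \Cref{lemma2}; it is quoted verbatim from \cite[Theorem~16]{new_functional} and used only as a black box (in \Cref{lemma4}, Theorem~\ref{th_prev}, and Appendix~A). There is therefore no in-paper proof to compare against.

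That said, your reconstruction via the Poisson functional representation is the correct route and is precisely the one underlying both the original SFRL of \cite{cheuk} and its sharpening in \cite{new_functional}. The chain you lay out --- the identity $I(X;U\mid Y)=H(Y\mid U)-I(X;Y)$ (valid because $I(U;X)=0$ and $H(Y\mid U,X)=0$), the reduction $H(Y\mid U)\le H(K)$, the first-moment bound on $\log K$ through the restriction property $\mathbb E[K\mid W_K,Z_K,X]\le T_K+1$, and the zeta/maximum-entropy conversion $H(K)\le\lambda\,\mathbb E[\log K]+\log\zeta(\lambda)$ --- is exactly how those references proceed. Your diagnosis of where the constants $5.51$ and $1.06$ come from (a joint optimization over the Poisson scale, the additive slack in the $T_K+1$ step, and the zeta exponent $\lambda$, with separate care in the low-$I(X;Y)$ regime) matches what \cite{new_functional} actually does to improve on the $\log(I(X;Y)+1)+4$ of \cite{cheuk}. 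As a proof plan it is sound; the only thing left is the bookkeeping you already flag.
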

 \begin{lemma}\label{lemma3} (Extended Functional Representation Lemma (EFRL)):
	For a pair of RVs $(X,Y)$ and any $0\leq\epsilon< I(X;Y)$, there exists a RV $U$ defined on $\mathcal{U}$ such that the leakage between $X$ and $U$ is equal to $\epsilon$, i.e., we have
	$
	I(U;X)= \epsilon,
	$
	$Y$ is a deterministic function of $(U,X)$, i.e., we have  
	$
	H(Y|U,X)=0,
	$
	and 
	$
	|\mathcal{U}|\leq \left[|\mathcal{X}|(|\mathcal{Y}|-1)+1\right]\left[|\mathcal{X}|+1\right].
	$
\end{lemma}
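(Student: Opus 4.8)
The plan is to augment the RV produced by the ordinary FRL with a randomized, partially erased copy of $X$ whose correlation with $X$ is tuned to exactly $\epsilon$. Concretely, I would first apply \Cref{lemma1} to $(X,Y)$ to get a RV $U_1$ on $\mathcal{U}_1$ with $I(U_1;X)=0$, $H(Y\mid U_1,X)=0$ and $|\mathcal{U}_1|\le |\mathcal{X}|(|\mathcal{Y}|-1)+1$. Then, drawing fresh randomness independent of $(X,Y,U_1)$, I would define an erasure variable $V$ taking values in $\mathcal{X}\cup\{\mathsf{e}\}$ for a new symbol $\mathsf{e}\notin\mathcal{X}$, via the channel $\Pr(V=x\mid X=x)=p$ and $\Pr(V=\mathsf{e}\mid X=x)=1-p$ for all $x$, with the parameter $p\in[0,1)$ to be chosen; the key structural requirement is that the joint law factor as $P_{XYU_1V}=P_{XYU_1}\,P_{V\mid X}$. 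Finally set $U=(U_1,V)$. The cardinality bound is then immediate: $U$ is supported on $\mathcal{U}_1\times(\mathcal{X}\cup\{\mathsf{e}\})$, so $|\mathcal{U}|\le\bigl[|\mathcal{X}|(|\mathcal{Y}|-1)+1\bigr]\bigl[|\mathcal{X}|+1\bigr]$.

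Verifying the two information constraints is the analytic core. The functional constraint is easy: $H(Y\mid U,X)=H(Y\mid U_1,V,X)\le H(Y\mid U_1,X)=0$. For the leakage, I would first note that the factorization above gives $I(U_1;V\mid X)=0$, and combined with $I(U_1;X)=0$ from \Cref{lemma1} the chain rule $I(U_1;X,V)=I(U_1;X)+I(U_1;V\mid X)=I(U_1;V)+I(U_1;X\mid V)$ forces $I(U_1;X\mid V)=0$. Hence $I(U;X)=I(U_1,V;X)=I(V;X)+I(U_1;X\mid V)=I(V;X)$. It remains to compute the mutual information of the erasure channel: a short calculation yields $H(V\mid X)=\eta(p)$ and $H(V)=pH(X)+\eta(p)$, where $\eta(p)=-p\log p-(1-p)\log(1-p)$, so the two entropy terms cancel and $I(V;X)=pH(X)$.

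To finish, I would set $p=\epsilon/H(X)$, which is well defined and lies in $[0,1)$ precisely because $0\le\epsilon<I(X;Y)\le H(X)$ (if $H(X)=0$ the hypothesis range $[0,I(X;Y))$ is empty and there is nothing to prove). This gives $I(U;X)=\epsilon$ as required, and one checks that $\epsilon=0$ makes $V$ constantly equal to $\mathsf{e}$, so the construction degenerates to $U=U_1$ and recovers \Cref{lemma1}. The only place that needs care is setting up the auxiliary randomness so that $V$ is conditionally independent of $(Y,U_1)$ given $X$ -- this is what makes both $I(U_1;X)=0$ survive and $I(U_1;X\mid V)=0$ hold -- but once the joint distribution is written as $P_{XYU_1}P_{V\mid X}$ everything else is a one- or two-line chain-rule argument, so I do not anticipate a genuine obstacle.
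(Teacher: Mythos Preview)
Your proposal is correct and matches the paper's approach: the paper (citing \cite{king3}) constructs $U$ by pairing the FRL output with a randomized-response/erasure variable on $\mathcal{X}\cup\{\mathsf{e}\}$ tuned so that $I(V;X)=pH(X)=\epsilon$, exactly as you do. Your chain-rule argument isolating $I(U;X)=I(V;X)$ via $I(U_1;X\mid V)=0$ and the erasure-channel computation $I(V;X)=pH(X)$ are both clean and correct, and the cardinality bound follows immediately from the product structure.
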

Next, we improve the ESFRL in \cite{king3} using Lemma \ref{lemma2} instead of \cite[Theorem~1]{cheuk}. 
\begin{lemma}\label{lemma4}
	For a pair of RVs $(X,Y)$ and any $0\leq\epsilon< I(X;Y)$, there exists a RV $U$ defined on $\mathcal{U}$ such that the leakage between $X$ and $U$ is equal to $\epsilon$, i.e., we have
	$
	I(U;X)= \epsilon,
	$
	$Y$ is a deterministic function of $(U,X)$, i.e., we have 
	$
	H(Y|U,X)=0,
	$ and 
	$I(X;U|Y)$ can be  upper bounded as follows 
	$
	I(X;U|Y)\leq \alpha H(X|Y)+(1-\alpha)\left[ \log(I(X;Y)+5.51)+1.06\right],
	$
	where $\alpha =\frac{\epsilon}{H(X)}$.
\end{lemma}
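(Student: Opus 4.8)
The plan is to combine the tighter SFRL of Lemma~\ref{lemma2} with the time-sharing (randomization) idea behind the EFRL of Lemma~\ref{lemma3}. First I would fix the two extreme mechanisms. For the low-leakage extreme, apply Lemma~\ref{lemma2} to $(X,Y)$ to obtain a RV $U'$ with $I(U';X)=0$, $H(Y|U',X)=0$, and $I(X;U'|Y)\le \log(I(X;Y)+5.51)+1.06=:\beta$. For the high-leakage extreme, observe that the pair $(U',X)$ satisfies $I((U',X);X)=H(X)$, still $H(Y|(U',X),X)=0$, and $I(X;(U',X)|Y)=H(X|Y)$, the last equality because the conditioning variable already contains $X$. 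Note also that $\epsilon<I(X;Y)\le H(X)$ forces $H(X)>0$, so $\alpha=\epsilon/H(X)\in[0,1)$ is well defined.

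Next I would interpolate between the two extremes with a biased coin. Introduce $\Theta\sim\mathrm{Bernoulli}(\alpha)$ independent of $(X,Y,U')$ and set $U=(U',\Theta,W)$, where $W=X$ when $\Theta=1$ and $W$ equals a fixed symbol when $\Theta=0$. The point of keeping $U'$ in \emph{both} branches is that it immediately yields $H(Y|U,X)\le H(Y|U',X)=0$, so $Y$ is a deterministic function of $(U,X)$ as required.

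Then I would verify the two information identities by conditioning on $\Theta$. Because $\Theta$ is independent of $(X,Y)$ we have $I(\Theta;X)=0$ and $I(X;\Theta|Y)=0$, and conditioning on $\{\Theta=\theta\}$ does not change the law of $(X,Y,U')$. Hence by the chain rule $I(U;X)=\alpha\,I((U',X);X)+(1-\alpha)\,I(U';X)=\alpha H(X)=\epsilon$, and likewise $I(X;U|Y)=\alpha\,I(X;(U',X)|Y)+(1-\alpha)\,I(X;U'|Y)\le \alpha H(X|Y)+(1-\alpha)\beta$, which is exactly the stated bound; for $\alpha=0$ the construction collapses to $U=U'$ and recovers Lemma~\ref{lemma2}.

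The bookkeeping with the chain rule and the independence of $\Theta$ is routine. The one step that needs care --- and the reason the naive ``set $U$ to reveal $X$ when $\Theta=1$'' mechanism fails --- is the deterministic-function constraint $H(Y|U,X)=0$: since $Y$ is generally not a function of $X$ alone, the high-leakage branch must still carry the SFRL variable $U'$, which is what makes $(U',X)$, rather than $X$ itself, the correct second extreme. Everything else follows the template of the EFRL proof in \cite{king3}, with Lemma~\ref{lemma2} supplying the improved constant $\beta$.
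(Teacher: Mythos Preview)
Your proposal is correct and follows essentially the same approach as the paper: take the RV $U'$ produced by the tighter SFRL (Lemma~\ref{lemma2}) and augment it with an independent randomized response that reveals $X$ with probability $\alpha=\epsilon/H(X)$, exactly the construction behind \cite[Lemma~4]{king3} with Lemma~\ref{lemma2} substituted for the original SFRL. Your explicit bookkeeping (keeping $U'$ in both branches to preserve $H(Y|U,X)=0$, and carrying $\Theta$ alongside $W$) is a harmless redundancy since $c\notin\mathcal{X}$ already makes $\Theta$ recoverable from $W$.
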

\begin{proof}
    The proof follows similar arguments as \cite[Lemma 4]{king3}. The only difference is to use Lemma \ref{lemma2} instead of the SFRL in \cite{cheuk}. 
\end{proof}
The main idea of constructing 
$U$, which satisfies the conditions in Lemmas \ref{lemma3} and \ref{lemma4}, is to add a randomized response to the output of Lemma 1 and 2, respectively. The output refers to a random variable that is produced by the FRL or the tighter version of SFRL and satisfies the corresponding constraints.
This idea has been used to find fair and private representations of a dataset in \cite{AmirITW2024}. In more detail, in \cite{AmirITW2024}, the following problems are studied:
\begin{align}
g^{r}(P_{S,X,T})&=\sup_{\begin{array}{c} 
	\substack{P_{Y|S,X,T}:I(Y;S)=0,\\ \ \ \ \ \ \ \ \  I(X;Y)\leq r }
	\end{array}}I(Y;T),
\label{eq:problem_11}\\
g^{r}_c(P_{S,X,T})&=\sup_{\begin{array}{c} 
	\substack{P_{Y|S,X,T}:I(Y;S)=0,\\ \ \ \ \ \ \ \ \ \ I(X;Y|S,T)\leq r }
	\end{array}}I(Y;T|S),
\label{eq:problem_22}
\end{align}
where $X$, $S$, $T$, and $Y$ denote the dataset, sensitive attribute, target, and the designed representation, respectively. To find fair representations, we used randomization techniques similar to those in Lemma \ref{lemma3} and Lemma \ref{lemma4}, which lead to lower bounds on \eqref{eq:problem_11} and \eqref{eq:problem_22}. Furthermore, this randomization technique has also been applied in the design of differentially private mechanisms \cite{erlingsson2014rappor,asoodeh2021three,alghamdi2023saddle,dwork2010boosting}. A similar randomization technique has been used in \cite{zamani2023private} to address private compression design problems.

\section{Problem Formulation} 
\label{sec:system}
In this section, we introduce the problem of designing a compressed representation of data with a bounded statistical (demographic) parity and/or privacy leakage constraint. Let the sensitive data $S$, shared data (useful data) $X$, and task $T$ be discrete random variables (RVs) defined on alphabets $\mathcal{S}$ of finite cardinality $\left| \mathcal{S} \right|
$, $\mathcal{X}$ of finite or countably infinite cardinality $\left| \mathcal{X} \right|$, and $\mathcal{T}$ of finite or countably infinite cardinality $\left| \mathcal{T} \right|
$, respectively. The marginal probability distributions of $S$, $X$, and $T$ are denoted by $P_S \in \mathbb{R}^{|\mathcal{S}|}$, $P_X \in \mathbb{R}^{|\mathcal{X}|}$, and $P_T \in \mathbb{R}^{|\mathcal{T}|}$, respectively. Furthermore, $P_{S,X,T} \in \mathbb{R}^{|\mathcal{S}|\times |\mathcal{X}| \times |\mathcal{T}|}$ denotes the joint distribution of $S$, $X$ and $T$.

Let the discrete RV $Y \in \mathcal{Y}$ denote the fair representation. The goal is to design a mapping $P_{Y|S,X,T} \in \mathbb{R}^{|\mathcal{Y}| \times |\mathcal{S}|\times |\mathcal{X}| \times |\mathcal{T}|}$ that maximizes the information it keeps about the task $T$, while maintaining a minimum level of compression $r$ and allowing a certain level of leakage $\epsilon$. As we outlined earlier, in this work, we relax the fairness constraint in \cite{AmirITW2024}, and we allow a bounded leakage, i.e., $I(Y;S)\leq\epsilon$. Hence, the fair/private representation design problem can be stated as follows:

\begin{align}
    g^{r}_{\epsilon}(P_{S,X,T})
    &=\sup_{\begin{array}{c} 
	\substack{P_{Y|S,X,T}:I(Y;S) \leq \epsilon,\\ I(X;Y)\leq r }
	\end{array}}I(Y;T).
    \label{eq:problem_1}
\end{align} 
The condition $I(U;S)\leq \epsilon$ corresponds to the bounded statistical parity or privacy leakage constraint, and $I(X;Y)\leq r$ represents the compression rate constraint.  
\begin{remark}
\normalfont
\looseness=-1 Similar to \cite{AmirITW2024}, we consider the case where $r < H(X)$. Otherwise, this leads to the privacy-utility trade-off problem studied in~\cite{Yanina1, king3}. Using a similar argument, we assume that $\epsilon<H(S)$, otherwise, considering $X$ as a private attribute, we get the problem in~\cite{Yanina1, king3}.
\end{remark}%
\begin{remark}
\normalfont
Similar to \cite{AmirITW2024} and in contrast with \cite{gun}, we assume that $X$, $S$ and $T$ are arbitrarily correlated and we ignored the Markov chain $(S,T)-X-Y$. As we mentioned in \cite{AmirITW2024}, an example where the Markov chain naturally holds is to let both $S$ and $T$ be deterministic functions of $X$. 
\end{remark}
\begin{remark}
\normalfont
\looseness=-1 For $\epsilon = 0$, \eqref{eq:problem_1}
leads to designing a fair representation of data with perfect demographic parity studied in \cite{AmirITW2024}, where upper and lower bounds have been obtained. We improve the results in \cite{AmirITW2024} using a tighter SFRL in \cite{new_functional}, i.e., Lemma \ref{lemma2}, as presented in Theorem \ref{th_prev}.
\end{remark}%
\begin{remark}
\normalfont
Although in this paper we assume that $X$, $S$ and $T$ are discrete RVs, all results can be extended to continuous RVs since Lemma \ref{lemma2} also holds for continuous RVs. For more detail, see \cite[Th. 16.]{new_functional}.
\end{remark}
As we outlined earlier, there are many cases where zero demographic parity/privacy leakage leads to zero utility.
Next, we provide an example where this is the case for perfect demographic parity, i.e., $I(U;S)=0$.
\begin{example}
    Let the Markov chain $(S,T)-X-Y$ hold and the leakage matrix $P_{S|X}\in \mathbb{R}^{|\mathcal{S}|\times|\mathcal{X}|}$ be invertible where the columns of $P_{S|X}$ are conditional distributions $P_{S|X=x}\in\mathbb{R}^{|\mathcal{S}|}$. In this case, using the Markov chain $S-X-Y$, we have
    \begin{align*}
        P_{X|Y=y}=P_{S|X}^{-1}P_{S|Y=y}\stackrel{(a)}{=}P_{S|X}^{-1}P_{S}\stackrel{(b)}{=}P_X,
    \end{align*}
    where (a) follows by the independence of $S$ and $Y$, (b) holds since by using the Markov chain $S-X-Y$ we have $P_{S|X}P_X=P_S$. Hence, in this case, $I(S;Y)=0$ leads to $I(Y;X)=0$. Furthermore, using the Markov chain $T-X-Y$, we have
    \begin{align*}
    P_{T|Y=y}&=P_{T|X}P_{X|Y=y}\stackrel{(a)}{=}P_{T|X}P_{X}=P_T,
    \end{align*}
    where (a) follows by $I(X;Y)=0$. Thus, we conclude that $I(Y;S)=0$ leads to $I(T;Y)=0$, i.e., zero utility in \eqref{eq:problem_1}.
 \end{example}

\section{Main Results}
\label{sec:resul}
In this section, we first improve the bounds derived in \cite{AmirITW2024} using a tighter version of SFRL in \cite{new_functional}, i.e., Lemma \ref{lemma2}. We then provide lower and upper bounds for the trade-off in \eqref{eq:problem_1}, under the bounded demographic parity (i.e., non-zero leakage) and a certain encoding rate. To this end, we utilize the extended version of FRL and SFRL, i.e., Lemmas \ref{lemma3} and \ref{lemma4}.

Before presenting the theorems, we first write the utility $I(Y, T)$ in $g^{r}_\epsilon(P_{S,X,T})$ as follows
\begin{align}
I(Y;T)&=I(X,S,T;Y)-I(X,S;Y|T),\nonumber\\
&=I(X,S;Y)+H(T|X,S)-H(T|Y,X,S)\nonumber\\ &\quad -I(X,S;Y|T).
\label{key}
\end{align}
Equation \eqref{key} helps us to find upper and lower bounds on \eqref{eq:problem_1}. In the next result, compared to \cite[Theorem 1]{AmirITW2024}, we provide new upper bounds and improve the lower bounds $L_3^{r}$ and $L_2$ in \cite[Theorem 1]{AmirITW2024} using Lemma \ref{lemma2}.
\begin{theorem}
    \label{th_prev}
	For every compression level $0\leq r\leq H(X|S)$ and RVs $(S, X, T) \sim P_{S,X,T}$, we have that
	\begin{align*}
	\max\{L_1^{r},L_2,L_3^{r}\}&\leq g^{r}_{0}(P_{S,X,T})\\&\leq \operatorname{min}\{H(T|X)+r ,H(T|S),U_0\},
	\end{align*}
	where
	\begin{align*}
	L_1^{r} &= H(T|X,S)+r-H(X,S|T),\\
	L_2 &= H(T|X,S)-\left( \log(I(X,S;T)+5.51)+1.06 \right),\\
	L_3^{r} &= H(T|X,S)+r-\alpha H(X,S|T)-1.06\\&\!\!\!\!\!\!\!\! \ -\!\log((1\!-\!\alpha)I(X,S;T)\!+\!\alpha\min\{H(T),H(X,S)\}\!+\!5.51),\\
    U_0&=H(T)+\\&\!\!\!\!\!\sum_{t\in\mathcal{T}}\int_{0}^{1} \mathbb{P}_S\{P_{T|S}(t|S)\geq m\}\log (\mathbb{P}_S\{P_{T|S}(t|S)\geq m\})dm
\end{align*}
and $\alpha=\frac{r}{H(X|S)}$. Moreover, for $H(X|S)\leq r< H(X)$ we have that
 \begin{align}
 \label{th22}
 \max\{{L'}_1^{r},L_2\}&\leq g^{r}(P_{S,X,T})\leq\\& \operatorname{min}\{H(T|X)+r ,H(T|S),U_0\},
 \end{align}
 where $
 {L'}_1^{r} = H(T|S)-H(S|T)=H(T)-H(S).$
\end{theorem}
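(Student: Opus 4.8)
The plan is to establish the three lower bounds and the three upper bounds separately, using the decomposition \eqref{key} as the organizing tool together with the extended representation lemmas. For the lower bounds, I would construct an explicit feasible $Y$ for each candidate. To get $L_1^{r}$, apply the EFRL (Lemma \ref{lemma3}) to the pair $((X,S), T)$ — or rather to $(X,S)$ as the "data" and $T$ as the "$Y$-role" variable — obtaining $U$ with $I(U; X,S) = 0$, $H(T \mid U, X, S) = 0$; then set $Y$ to be the pair $(U, \text{something})$ or just let $Y$ encode enough of $U$ to meet the rate budget $r$. Since $I(Y;S) = 0$ is forced by $I(U;X,S)=0$ and $Y$ being a function of $U$ (after appropriate thinning), and by \eqref{key} we have $I(Y;T) = I(X,S;Y) + H(T\mid X,S) - 0 - I(X,S;Y\mid T) \geq H(T\mid X,S) + r - H(X,S\mid T)$ once $I(X,S;Y) = r$ and $I(X,S;Y\mid T) \leq H(X,S\mid T)$. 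For $L_2$, apply the tighter SFRL (Lemma \ref{lemma2}) to $((X,S),T)$, which gives $I(X,S;U\mid T) \leq \log(I(X,S;T)+5.51)+1.06$ and $I(U;X,S)=0$; plugging into \eqref{key} yields $L_2$. For $L_3^{r}$, apply Lemma \ref{lemma4} to $((X,S),T)$ with leakage parameter chosen so that $\alpha = r/H(X|S)$; the conditional-leakage bound in Lemma \ref{lemma4} is exactly the convex combination appearing in $L_3^{r}$ (noting $H(X,S\mid T)$ replaces $H(X|Y)$ and $\min\{H(T),H(X,S)\}$ bounds one term), so \eqref{key} again delivers the claim. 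The ${L'}_1^{r}$ case for $r \geq H(X|S)$ is the degenerate regime where one can afford to reveal essentially all of $(X,S)$ subject to $I(Y;S)=0$, and the bound $H(T)-H(S)$ comes from taking $Y$ that captures $T$ fully modulo the $S$-independence cost.

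For the upper bounds, I would argue directly from information inequalities valid for any feasible $Y$. The bound $I(Y;T) \leq H(T\mid X) + r$ follows because $I(Y;T) \leq I(Y; X, T) = I(Y;X) + I(Y;T\mid X) \leq r + H(T\mid X)$. The bound $I(Y;T) \leq H(T\mid S)$ is immediate since $I(Y;T) \leq H(T) $ is too weak; instead one uses $I(Y;T) = I(Y;T) $ and the constraint structure — more precisely, with $\epsilon = 0$, $I(Y;T) \leq I(Y; S, T) = I(Y;S) + I(Y;T\mid S) = I(Y;T\mid S) \leq H(T\mid S)$. The third upper bound $U_0$ is the genuinely new ingredient: it is a bound on $\sup_{Y: I(Y;S)=0} I(Y;T)$ coming from the structure of the conditional distributions $P_{T\mid S}$. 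I would expect this to be derived via a convexity/rearrangement argument: $I(Y;T) = H(T) - H(T\mid Y)$, and since $P_{T\mid Y=y}$ must be expressible as a mixture of the columns $P_{T\mid S=s}$ weighted by $P_{S\mid Y=y}$, and the $S$-independence forces $\sum_y P_Y(y) P_{S\mid Y=y} = P_S$, one minimizes $\sum_y P_Y(y) H(T\mid Y=y)$ over such mixtures. The integral form $\int_0^1 \mathbb{P}_S\{P_{T|S}(t|S)\geq m\}\log(\cdots)\,dm$ strongly suggests a layer-cake (Fubini on the "horizontal" decomposition of each entry $P_{T\mid S}(t\mid s)$) representation of the entropy of the achievable $P_{T\mid Y=y}$ averaged appropriately — essentially identifying the extreme achievable output distribution and computing its entropy via the layer-cake identity $-p\log p = \int \cdots$.

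The main obstacle will be the derivation of the upper bound $U_0$ and verifying that the construction achieving $L_1^{r}$ (and $L_3^{r}$) is genuinely feasible — in particular that after thinning the output of EFRL/Lemma \ref{lemma4} down to rate exactly $r \leq H(X|S)$, one still has $I(Y;S)=0$ exactly (not just $\leq$) and $H(T\mid Y, X, S)$ controlled, and that the term $I(X,S;Y\mid T)$ is bounded by $H(X,S\mid T)$ rather than something larger. The subtlety is that the lemmas are stated for a fixed pair with a fixed leakage budget, so one must be careful about how $r$ (a budget on $I(X;Y)$, not $I(X,S;Y)$) interacts with applying the lemma to the composite variable $(X,S)$; here the hypothesis $r \leq H(X|S)$ is what makes it work, since it ensures the extra $S$-information does not push $I(X;Y)$ past $r$. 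I would handle this by noting that in the EFRL construction $Y$ can be taken independent of $S$, so $I(X;Y) = I(X,S;Y)$, and then the rate accounting is clean. For $U_0$, I would set up the optimization $\min \sum_y P_Y(y) H(P_{T\mid Y=y})$ subject to the mixture and $S$-marginal constraints, guess the optimizer (a deterministic-in-$S$ partition refinement), and verify via the layer-cake identity that its value equals $-\sum_t \int_0^1 \mathbb{P}_S\{P_{T|S}(t|S)\geq m\}\log(\cdots)\,dm$, giving $I(Y;T) = H(T) - H(T\mid Y) \leq U_0$.
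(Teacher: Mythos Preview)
Your treatment of $L_2$ and of the upper bounds $H(T\mid X)+r$ and $H(T\mid S)$ is fine and matches what the paper does. The paper also does not re-derive $U_0$; it simply drops the rate constraint and invokes \cite[Th.~4]{king3}, which already gives $\sup_{I(Y;S)=0} I(Y;T)\le U_0$. Your layer-cake sketch may well recover that result, but it is not needed here.

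The genuine gap is in your constructions for $L_1^{r}$ and $L_3^{r}$. You propose to apply Lemma~\ref{lemma3}/\ref{lemma4} (or Lemma~\ref{lemma1}) with $(X,S)$ in the ``data'' slot and $T$ in the ``target'' slot, and then ``thin'' the output to rate $r$. But if you take the FRL output $U$ with $I(U;X,S)=0$, then any function of $U$ also has zero mutual information with $(X,S)$, so you cannot get $I(X,S;Y)=r>0$ that way; and if instead you take the EFRL/ESFRL output with $I(U;X,S)=\epsilon'>0$, the construction adds a randomized response revealing $(X,S)$, which in general makes $I(U;S)>0$, violating perfect parity. Your closing remark that ``in the EFRL construction $Y$ can be taken independent of $S$'' is exactly the unjustified step: nothing in Lemmas~\ref{lemma3}--\ref{lemma4} applied to the block $(X,S)$ guarantees this.

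The paper's route (inherited from \cite{AmirITW2024} and spelled out in Appendix~A for the $\epsilon\ge 0$ case) is a \emph{two-stage} construction that decouples the parity and rate constraints. First apply FRL (Lemma~\ref{lemma1}) with $S$ in the data slot and $X$ in the target slot, producing $U$ with $I(U;S)=0$ and $H(X\mid U,S)=0$; this $U$ satisfies $I(U;X,S)=H(X\mid S)$ and is automatically independent of $S$. Second, randomize $U$ to $U'$ (replace $U$ by a constant with probability $1-\alpha$, $\alpha=r/H(X\mid S)$) so that $I(U';X,S)=r$ while still $I(U';S)=0$ and hence $I(U';X)\le r$. Third, apply Lemma~\ref{lemma1} (for $L_1^{r}$) or Lemma~\ref{lemma2} conditionally (for $L_3^{r}$) with $(S,X,U')$ in the data slot and $T$ in the target slot to obtain $Y'$ with $I(Y';S,X,U')=0$ and $H(T\mid Y',S,X,U')=0$. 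The feasible representation is $Y=(U',Y')$, and \eqref{key} with $I(X,S;Y)=r$ and the appropriate bound on $I(X,S;Y\mid T)$ yields $L_1^{r}$ and $L_3^{r}$. The key idea you are missing is this first-stage extraction of the ``$S$-free part of $X$'' via FRL on $(S,X)$, which is what allows one to spend rate on $X$ without leaking~$S$.
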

\begin{proof}
To obtain $L_3^{r}$ and $L_2$, we use arguments similar to those in \cite[Th. 1]{AmirITW2024}. The only difference is that we use Lemma \ref{lemma2} instead of SFRL. To obtain the new upper bound $H(T|X)+r$, by removing the constraint $I(U;S)=0$, we obtain
\begin{align*}
    g^{r}_{0}(P_{S,X,T})\leq \sup_{\begin{array}{c} 
	\substack{I(X;Y)\leq r }
	\end{array}}I(Y;T)\stackrel{(a)}{\leq} H(T|X)+r,
\end{align*}
where (a) follows by \cite[Lemma 6]{king3}. Finally, to obtain the upper bound $U_0$, by removing the rate constraint and using \cite[Th. 4]{king3} with $Y\leftarrow T$ and $X\leftarrow S$, we have
\begin{align*}
g^{r}_{0}(P_{S,X,T})\leq\!\!\! \sup_{\begin{array}{c} 
	\substack{I(S;Y)=0 }
	\end{array}}\!\!\!I(Y;T)\leq U_0.
\end{align*}
\end{proof}
\begin{corollary}
    As argued in \cite{AmirITW2024}, considering the regime $H(X|S)\leq r\leq H(X)$, when $S$ is a deterministic function of $T$, i.e., $S=f(T)$, $L_1^r$ is tight achieving the upper bound $H(T|S)$. Furthermore, considering the regime $0\leq r\leq H(X|S)$, when $S$ is a deterministic function of $X$, and $X$ a deterministic function of $T$, i.e., $S=f(X)$ and $X=g(T)$, $L_1^r$ is tight achieving the new upper bound $H(T|X)+r$, which is presented in this paper. As argued in \cite{king3}, there are many cases where $U_0$ is a tighter bound than $H(T|S)$, i.e., $U_0\leq H(T|S)$. For more detail, see \cite[Example 3]{king3}.  
\end{corollary}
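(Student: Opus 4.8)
The plan is to verify both tightness claims by reducing the relevant lower bound to one of the terms appearing in the upper-bound minimum under the stated deterministic assumptions. Tightness then follows immediately, because a valid lower bound that coincides with one of the terms of a $\min\{\cdot\}$ upper bound is automatically sandwiched between the two, forcing equality; no separate verification that the remaining terms dominate is needed.

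For the first claim (regime $H(X|S)\leq r\leq H(X)$ with $S=f(T)$), I would start from the lower bound ${L'}_1^{r}=H(T|S)-H(S|T)=H(T)-H(S)$ appearing in \eqref{th22}. Since $S=f(T)$ forces $H(S|T)=0$, the identity collapses to ${L'}_1^{r}=H(T|S)$. Because $H(T|S)$ is one of the three terms in the upper bound $\min\{H(T|X)+r,H(T|S),U_0\}$, we obtain $g^{r}(P_{S,X,T})\leq H(T|S)={L'}_1^{r}\leq g^{r}(P_{S,X,T})$, so equality holds and the bound is tight at $H(T|S)$.

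For the second claim (regime $0\leq r\leq H(X|S)$ with $S=f(X)$ and $X=g(T)$), I would expand the lower bound $L_1^{r}=H(T|X,S)+r-H(X,S|T)$. The assumption $S=f(X)$ gives $H(S|X)=0$, hence $H(T|X,S)=H(T|X)$ and $H(X,S|T)=H(X|T)+H(S|X,T)=H(X|T)$. The assumption $X=g(T)$ then gives $H(X|T)=0$, so $L_1^{r}=H(T|X)+r$. As $H(T|X)+r$ is a term of the upper-bound minimum, the same sandwich argument yields $g^{r}_{0}(P_{S,X,T})=H(T|X)+r$, establishing tightness against the new upper bound introduced in Theorem~\ref{th_prev}.

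The remaining assertion that $U_0\leq H(T|S)$ in many instances is not proved here; it is cited from \cite[Example 3]{king3}. The only genuine work is the short entropy bookkeeping above, and the main conceptual point — which sidesteps any real obstacle — is that matching the lower bound to a single term of the minimum suffices, so one need not check that the other two upper-bound terms are larger.
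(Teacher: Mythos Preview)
Your proposal is correct and matches the paper's (implicit) reasoning: the corollary is stated without an explicit proof, relying on precisely the entropy simplifications you carry out—${L'}_1^{r}=H(T|S)$ when $H(S|T)=0$, and $L_1^{r}=H(T|X)+r$ when $H(S|X)=H(X|T)=0$—together with the sandwich against the $\min\{\cdot\}$ upper bound. Your observation that matching a single term of the minimum already forces equality (and hence automatically certifies the other terms are no smaller) is exactly the right shortcut.
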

Next, we present the results using the bounded statistical parity constraint, that is, $\epsilon \geq 0$.

\begin{theorem}
    \label{th_new}
	For every compression level $r \geq 0$, statistical parity bound $\epsilon \geq 0$, and RVs $(S, X, T) \sim P_{S,X,T}$, the trade-off denoted as $g^{r}_{\epsilon}(P_{S,X,T})$, can be upper bounded by 
    \begin{align}\label{Upper_asli}
        g^{r}_{\epsilon}(P_{S,X,T})\leq \operatorname{min}\left\{H(T|S)+\epsilon, H(T|X)+r\right\}.
    \end{align} 
    Moreover, the following lower bounds can be attained depending on the regime of parameters:
    \begin{enumerate}
    \item $0 \leq r \leq H(X|S)+\epsilon$: 
	\begin{align}
	\max\{L_0, L_1^{r, \epsilon}, L_2^{r, \epsilon}\}\leq g^{r}_{\epsilon}(P_{S,X,T}),
	\end{align}
    	where
	\begin{align*}
     L_0 &= H(T|X,S)\!-\!\left( \log(I(X,S;T)\!+\!5.51)\!+\!1.06 \right), \\
	L_1^{r, \epsilon} &= H(T|X,S)+r-H(X,S|T),\\
	L_2^{r, \epsilon} &= H(T|X,S)+r-\alpha H(X,S|T)\\&\quad-1.06-\log((1\!-\!\alpha)I(X,S;T)\!\!\\&\quad+\alpha\min\{H(T),H(X,S)\}\!+\!5.51),
\end{align*}
and $\alpha=\frac{r}{H(X|S)+\epsilon}$.
\item $H(X|S)+\epsilon \leq r < H(X)$
\begin{align}
	\max\{L_0, L_1'^{r, \epsilon}\}\leq g^{r}_{\epsilon}(P_{S,X,T}),
	\end{align}
    where
 $   L_1'^{r, \epsilon} = H(T|S)-H(S|T)+\epsilon$.
    \item $ r \leq \epsilon \leq H(S|X)+r$:
	\begin{align}
	\max\{L_0, L_3^{r, \epsilon}, L_4^{r, \epsilon}\}\leq g^{r}_{\epsilon}(P_{S,X,T}),
	\end{align}
    	where
\begin{align*}
      L_3^{r, \epsilon} &=  H(T|X,S)+\epsilon-H(X,S|T),\\
    L_4^{r, \epsilon} &=  H(T|X,S)+\epsilon-\tilde{\alpha} H(X,S|T)\\&\quad-1.06-\log((1\!-\!\tilde{\alpha})I(X,S;T)\!\\ &\quad+\tilde{\alpha}\min\{H(T),H(X,S)\}\!+\!5.51),
\end{align*}
and $\tilde{\alpha}=\frac{\epsilon}{H(S|X)+r}$.

\item $H(S|X)+r \leq \epsilon < H(S)$
\begin{align}
	\max\{L_0, L_3'^{r, \epsilon}\}\leq g^{r}_{\epsilon}(P_{S,X,T}),
	\end{align}
    where
 $   L_3'^{r, \epsilon} = H(T|X)-H(X|T)+r$.
    \end{enumerate}
\end{theorem}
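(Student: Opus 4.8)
## Proof Proposal

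The plan is to establish the upper bound first and then construct, regime by regime, random variables $Y$ attaining each lower bound by combining the extended lemmas (Lemma~\ref{lemma3} and Lemma~\ref{lemma4}) with the decomposition \eqref{key}. For the upper bound \eqref{Upper_asli}, I would argue exactly as in the proof of Theorem~\ref{th_prev}: relaxing the constraint $I(Y;S)\le\epsilon$ by dropping it and keeping only $I(X;Y)\le r$ gives $g^r_\epsilon(P_{S,X,T})\le H(T|X)+r$ via \cite[Lemma~6]{king3}; symmetrically, dropping the rate constraint and keeping only $I(Y;S)\le\epsilon$, then treating $X$ as the "useful data" and $S$ as the "secret" with leakage $\epsilon$, gives $g^r_\epsilon(P_{S,X,T})\le H(T|S)+\epsilon$ by the same lemma applied with the roles of $X$ and $S$ swapped. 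Taking the minimum yields \eqref{Upper_asli}.

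For the lower bounds in regimes~1 and~2 (where the binding resource is the rate $r$ and the parity budget $\epsilon$ is "extra room" on the $S$ side), the idea is to apply the EFRL/ESFRL to the pair $(X,S)$ jointly versus $T$ — i.e. set $X\leftarrow (X,S)$ and $Y\leftarrow T$ in Lemma~\ref{lemma3} or Lemma~\ref{lemma4} — obtaining a $U$ with $I(U;X,S)=$ (a controlled amount), $T$ a deterministic function of $(U,X,S)$, and a bound on $I(X,S;U|T)$. One then sets $Y=U$ (or $Y=(U,T)$-type constructions as in \cite{AmirITW2024}) and plugs into \eqref{key}: $I(Y;T)=I(X,S;Y)+H(T|X,S)-H(T|Y,X,S)-I(X,S;Y|T)$, where $H(T|Y,X,S)=0$ by the deterministic-function property. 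The leakage $I(Y;S)$ must be checked to be at most $\epsilon$; here is where the budget $\alpha=\frac{r}{H(X|S)+\epsilon}$ enters — the randomized-response parameter is tuned so that the total $I(U;X,S)$ is split to respect both $I(X;Y)\le r$ and $I(Y;S)\le\epsilon$ simultaneously, which is why the threshold between regimes~1 and~2 is $r=H(X|S)+\epsilon$. In regime~2, $r$ is large enough that one can afford $Y$ essentially determining $X$ modulo what is needed, and the bound degenerates to $H(T|S)-H(S|T)+\epsilon=H(T)-H(S)+\epsilon$, obtained by a direct FRL-type construction plus the extra $\epsilon$ of allowed $S$-leakage. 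Regimes~3 and~4 are the mirror image: there $\epsilon$ is the binding resource and $r$ is slack, so one runs the same machinery with the roles of $X$ and $S$ interchanged, applying the lemmas to $(X,S)$ versus $T$ but now budgeting $\tilde\alpha=\frac{\epsilon}{H(S|X)+r}$, giving $L_3^{r,\epsilon}, L_4^{r,\epsilon}, L_3'^{r,\epsilon}$. The bound $L_0$, valid in all regimes, comes from applying Lemma~\ref{lemma2} directly to $(X,S)$ versus $T$ with zero extra randomization, which always satisfies both constraints trivially (it gives $I(Y;X),I(Y;S)\le I(Y;X,S)$ small) — actually $L_0$ is just the $\epsilon$-free, $r$-free SFRL bound and holds because $Y$ from Lemma~\ref{lemma2} has $I(U;X,S)=0$, hence $I(Y;S)=I(Y;X)=0\le\min\{r,\epsilon\}$.

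The main obstacle I anticipate is the bookkeeping in the leakage check: unlike the perfect-parity case in \cite{AmirITW2024}, where $I(Y;S)=0$ holds by construction, here one must verify that the randomized-response mechanism layered on top of the FRL/SFRL output produces exactly (or at most) $\epsilon$ bits of $S$-leakage \emph{and} at most $r$ bits of $X$-leakage, and that these two constraints do not conflict — this requires carefully decomposing $I(U;X,S)$, using $H(X|S)$ versus $H(S|X)$ as the relevant "capacity" of the randomization channel, and showing the choice of $\alpha$ (resp. $\tilde\alpha$) makes both inequalities tight or feasible. The second delicate point is confirming that the $-I(X,S;Y|T)$ term in \eqref{key} is controlled precisely by the ESFRL bound of Lemma~\ref{lemma4} with the stated $\alpha$, so that the resulting expression collapses to $L_2^{r,\epsilon}$ (resp. $L_4^{r,\epsilon}$); the $\alpha H(X,S|T)$ and $(1-\alpha)[\log(I(X,S;T)+5.51)+1.06]$ pieces must be matched term-by-term against Lemma~\ref{lemma4}'s conclusion. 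I would handle regimes~1 and~3 (the $L_1$-type and $L_3$-type bounds) first since they use only the EFRL (Lemma~\ref{lemma3}) and are cleaner, then do regimes with the ESFRL, and finally note regimes~2 and~4 as boundary/degenerate cases of the construction.
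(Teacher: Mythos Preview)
Your upper-bound argument and your treatment of $L_0$ are correct and match the paper. The gap is in your construction for $L_1^{r,\epsilon}$ and $L_2^{r,\epsilon}$ (and symmetrically $L_3^{r,\epsilon}$, $L_4^{r,\epsilon}$). Applying the EFRL/ESFRL with $X\leftarrow(X,S)$ and $Y\leftarrow T$ gives a $U$ with $I(U;X,S)=\delta$ and $H(T|U,X,S)=0$, but it gives \emph{no separate control} over $I(U;S)$ and $I(U;X)$: all you know is that each is at most $\delta$. To satisfy both constraints you would be forced to take $\delta\le\min\{r,\epsilon\}$, and plugging into \eqref{key} yields only $H(T|X,S)+\min\{r,\epsilon\}-H(X,S|T)$, strictly weaker than $L_1^{r,\epsilon}=H(T|X,S)+r-H(X,S|T)$ whenever $\epsilon<r$ (in particular your bound collapses to zero gain at $\epsilon=0$, whereas $L_1^{r}$ in Theorem~\ref{th_prev} does not). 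The randomization parameter $\alpha=\frac{r}{H(X|S)+\epsilon}$ you write down cannot rescue this, because once $U$ is built from EFRL on $(X,S)$ versus $T$ there is no mechanism to distribute the $\delta$ bits asymmetrically between $S$ and $X$; further randomization only scales both by the same factor.

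The paper's fix is a \emph{two-stage} construction that your proposal misses. Stage one applies Lemma~\ref{lemma3} with $X\leftarrow S$ and $Y\leftarrow X$ (i.e.\ between the secret and the data, \emph{not} involving $T$ at all), producing $U$ with $I(U;S)=\epsilon$ exactly and $H(X|U,S)=0$. The latter forces $I(U;X,S)=\epsilon+H(X|S)$ and hence $I(U;X)\le\epsilon+H(X|S)$. Now randomize $U$ to $U'$ with probability $\alpha=\frac{r}{H(X|S)+\epsilon}$: this scales both leakages by $\alpha$, yielding $I(U';S)=\alpha\epsilon\le\epsilon$, $I(U';X)\le r$, and crucially $I(U';X,S)=r$ exactly. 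Only in stage two do you bring in $T$: apply Lemma~\ref{lemma1} (resp.\ Lemma~\ref{lemma2} conditionally on $U'$) with $X\leftarrow(S,X,U')$ and $Y\leftarrow T$ to obtain $Y'$ independent of $(S,X,U')$ with $H(T|Y',S,X,U')=0$, then set $Y=(U',Y')$. The first stage is precisely what manufactures the asymmetric leakage profile that makes $I(U';X,S)=r$ compatible with $I(U';S)\le\epsilon$; a single application of EFRL to $(X,S)$ versus $T$ cannot do this. Regimes~3 and~4 are indeed the mirror image, but again with the two-stage structure: stage one is Lemma~\ref{lemma3} with $X\leftarrow X$, $Y\leftarrow S$, giving $I(U;X)=r$ exactly and $H(S|U,X)=0$, then randomize with $\tilde\alpha$, then stage two against $T$.
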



\begin{sproof}
Here, the main idea of the proof is provided and the complete proof is provided in Appendix A.
   To derive the upper bounds, we first remove either the parity or rate constraints, and then we use \cite[Lemma 6]{king3}.
   The derivation of $L_0$ is similar to the proof of \cite[Theorem 1]{AmirITW2024} since it does not depend on $r$ and $\epsilon$. In other words, the lower bounds in \cite[Theorem 1]{AmirITW2024} can be also used for $g^{r}_{\epsilon}(P_{S,X,T})$, since we have
   $
   g^{r}_{0}(P_{S,X,T})\leq g^{r}_{\epsilon}(P_{S,X,T}).
   $
   The only difference is that in this paper we use Lemma \ref{lemma2} instead of the SFRL. To obtain $L_1^{r, \epsilon}$, we first construct $U$ using the FRL (Lemma 1) so that $I(U;S)=\epsilon$ and $H(X|U,S)=0$. Since such $U$ does not necessarily satisfy the rate constraint, we randomize over $U$ to ensure that $I(U';X)\leq r$ and $I(U';S)\leq \epsilon$. Finally, we produce $Y'$ using Lemma 1 such that $I(Y';S,X,U')=0$ and $H(T|Y',U',X,S)=0$. Then we let $Y=(Y',U')$ and we show that it achieves $L_1^{r, \epsilon}$. To achieve $L_2^{r,\epsilon}$, we use same $U'$ as before, but to produce $Y'$ we use Lemma 2 conditionally with $X\leftarrow (S,X)|U'$ and $Y\leftarrow T$. To attain $L_3^{r,\epsilon}$, we first build $U$ using Lemma 3 ensuring $I(U;X)=r$ and then randomize it so that $U'$ satisfies $I(U';S)\leq \epsilon$. Finally, we build $Y'$ similar as in achieving $L_1^{r, \epsilon}$. The achievability of $L_4^{r,\epsilon}$ is similar and the only difference is to build $Y'$ by using Lemma 2 conditionally. To achieve $L_1'^{r,\epsilon}$ and $L_3'^{r,\epsilon}$ we use
    $   L_1'^{r,\epsilon} = L_1^{r=H(X|S)+\epsilon,\epsilon}$ and
       $L_3'^{r,\epsilon} = L_1^{r,\epsilon=H(S|X)+r}$.
   The main reason for producing $Y'$ in the lower bounds is to make the term  $H(T|X,S,Y)$ in \eqref{key} zero, thereby improving the utility. Furthermore, the main reason to produce $U$ in $L_1^{r,\epsilon}$ is that the term $I(X,S;Y)$ in \eqref{key} achieves its maximum which equals to $\epsilon+H(X|S)$, however, it does not necessarily satisfy the rate constraint. Hence, we randomized over $U$ to produce $U'$. A similar reason holds for achieving $L_3^{r,\epsilon}$. 
   \end{sproof}
\begin{figure}
    \centering
    \includegraphics[width=1\linewidth]{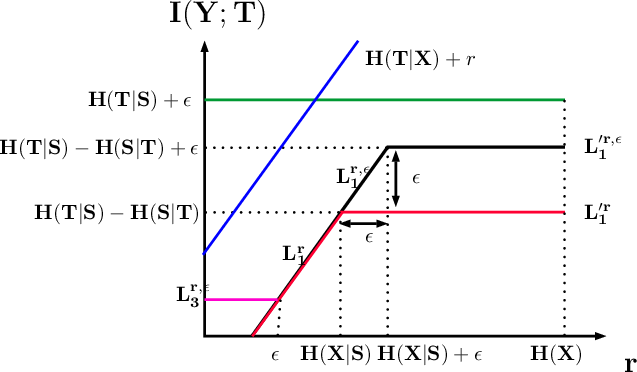}
    \caption{Comparison of the bounds in Theorems 1 and 2.}
    \vspace{-2mm}
    \label{compare}
\end{figure}
\vspace{-5mm}
\begin{remark}\normalfont 
As shown in Fig. \ref{compare}, comparing Theorems 1 and 2, we can see that relaxing the perfect demographic parity constraint leads to improved utility. For instance, 
$L_{1}^{r,\epsilon}$
  improves upon 
  $L_1^r$
  as it is attained over a larger regime.  
  Furthermore, 
$L_{2}^{r,\epsilon}$
  can improve 
$L_3^r$, as the corresponding 
$\alpha$ is smaller compared to Theorem 1. The latter follows since by ignoring the $\log(\cdot)$ terms (since
they have smaller values compared to the other terms), $\alpha H(X,S|T)$ is smaller in $L_{2}^{r,\epsilon}$.
\end{remark}



\begin{remark}\label{re6}
As we outlined earlier, to achieve $L_1^{r,\epsilon}$ and $L_2^{r,\epsilon}$, we first build $U$ using the EFRL and ESFRL, i.e., Lemmas 3 and 4,  ensuring that $I(U;S)=\epsilon$. We then randomize over $U$ such that $U'$ satisfies $I(U';X)\leq r$.   
    On the other hand, to attain $L_3^{r,\epsilon}$ and $L_4^{r,\epsilon}$, we first build $U$ using Lemmas 3 and 4 ensuring $I(U;X)=r$ and then randomize it so that $U'$ satisfies $I(U';S)\leq \epsilon$. The novelty in achieving $L_3^{r,\epsilon}$ and $L_4^{r,\epsilon}$ follows by using the relaxed statistical parity constraint $I(Y;S)\leq \epsilon$. In other words, since we are allowed to leak information about $S$, we can construct $U$ to attain $\epsilon$ for the parity constraint and then update it to satisfy the rate constraint. Finally, if we enforce perfect demographic parity, we cannot use such a technique.
\end{remark}


In the following, we illustrate some of the bounds of Theorem \ref{th_new} using an example inspired by the Experiment section provided in \cite{zamani2025variable}. 
\begin{example}\normalfont
\label{example_numeric}
   \begin{figure}[ht]
    \centering
    \includegraphics[width=1\linewidth]{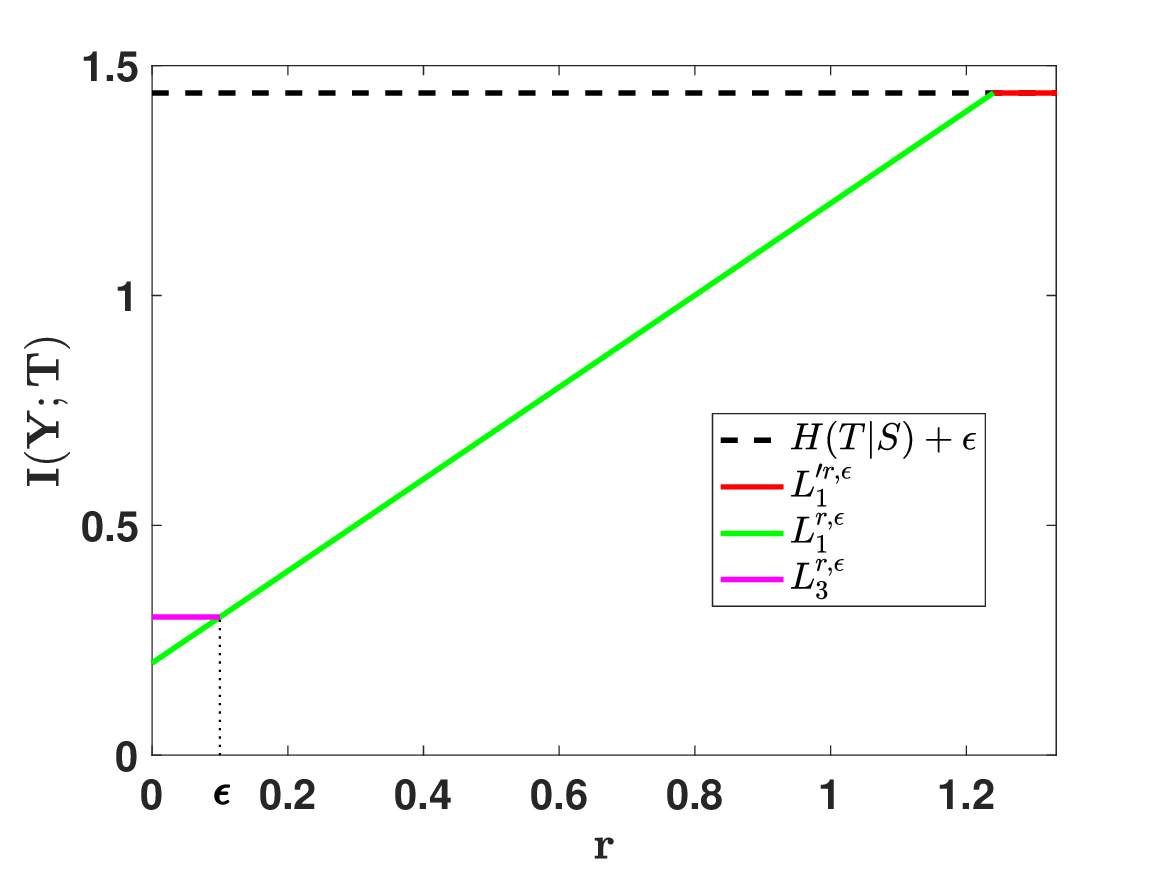}
    \caption{Comparison of the bounds in Example \ref{example_numeric}.}
    \vspace{-2mm}
    \label{compare_ex1}
\end{figure}
Considering \cite[Experiment]{zamani2025variable}, let 
	$X$ denote the useful data defined as a ``quantized histogram." Similar to \cite{zamani2025variable}, we find the quantized histogram by converting the images into black and white and then quantizing the histogram of any image using the following intervals: Interval~1 $=[0,0.1)$, Interval~2 $=[0.1,0.15)$, Interval~3 $=[0.15,0.2)$, Interval~4 $=[0.2,0.25)$, Interval~5 $=[0.25,0.3)$, Interval~6 $=[0.3,0.35)$ and Interval~7 $=[0.35,1]$. Therefore, each image in the database is labeled with a number between 1 and 7, indicating the interval to which its histogram belongs.
    Let the target $T$ be the label of each image that is a number from $0$ to $9$ indicating the digit which the image represents, i.e., $T\in\{0,1,...,9\}$. Furthermore, let the private attribute $S$ be a ternary RV, i.e., $S\in\{0,1,2\}$ and be defiend as $S=\begin{cases}
	2\ \text{if}\ T\in\{4,5,7,9\},\\
	1\ \text{if}\ T\in \{0,2,3,6,8\},\\
    0\ \text{if}\ T\in\{1\}
	\end{cases}$. To find the joint distribution of $(S,X,T)$, we use \cite[Eq. (94)]{zamani2025variable} and \cite[Table. 1]{zamani2025variable}, substituting $Z$ with $T$, $Y$ with $X$ and $X$ with $S$. Here, we choose $\epsilon=0.1$. In Fig. \ref{compare}, the lower bounds $L_1^{r,\epsilon}$, $L_1'^{r,\epsilon}$, $L_3^{r,\epsilon}$, and the upper bound $H(T|S)+\epsilon$ are shown. Clearly, in this example $S$ is a deterministic function of $T$, therefore; the lower bound $L_1'^{r,\epsilon}$, meets the upper bound $H(T|S)+\epsilon$. This can be verified in Fig. \ref{compare}. Furthermore, for $r\leq \epsilon$, we can see that $L_3^{r,\epsilon}$ improves $L_1^{r,\epsilon}$.
\end{example}
Next, we provide an example inspired by the noisy typewriter in \cite{cover1999elements}, where the lower bound $L_2^{r,\epsilon}$ improves $L_1^{r,\epsilon}$.
\begin{example}\label{chooneabol}\normalfont
Considering the noisy typewriter example in \cite{cover1999elements}, let $T$ correspond to the input alphabet and $X$ be the output alphabet, where $X$ is a noisy version of $T$. 
Here, let $T$ be a uniform RV over $\{1, \ldots, 1000\}$. A typewriter produces the output by choosing the same input symbol with probability $\frac{1}{3}$ and 
any other character uniformly, i.e., with probability $\frac{2}{3 \times 999}$ for each of the remaining characters. In other words, we have
\begin{align*}
    P_{X|T=t}=\begin{cases}
    X=t\ \text{w.p.}\ \frac{1}{3},\\
    X=i\ \text{w.p.}\ \frac{2}{3*999}, \ \forall i\in\{1,...,1000\}\backslash t,
    \end{cases}.
\end{align*}
Furthermore, let the private attribute $S$ be a ternary RV, i.e., $S \in \{0,1,2\}$, and define it as  
\[
S = \begin{cases}  
    2, & \text{if } \text{res}_{10} (T) \in \{2,\ldots,9\},\\  
    1, & \text{if } \text{res}_{10} (T) \in \{1\},\\  
    0, & \text{if } \text{res}_{10} (T) \in \{0\},  
\end{cases}  
\]
where $\text{res}_{10} (T)$ denotes the residue of $T$ modulo 10. Clearly, in this example, $S$ is a deterministic function of $T$. Here, we choose $\epsilon=0.1$. As shown in Fig.~\ref{compare5}, for the regime $r \leq 2.52$, $L_2^{r,\epsilon}$ improves upon $L_1^{r,\epsilon}$. 
The lower bound $L_3^{r,\epsilon}$ becomes negative and is therefore omitted. 
Furthermore, for $r \leq 1.08$, the upper bound $H(T|X) + r$ is 
tighter than the upper bound $H(T|S) + \epsilon$. Finally, similar to the previous example, the lower bound $L_1'^{r,\epsilon}$ is tight since $S = f(T)$.
\begin{figure}[ht]
    \centering
    \includegraphics[width=1\linewidth]{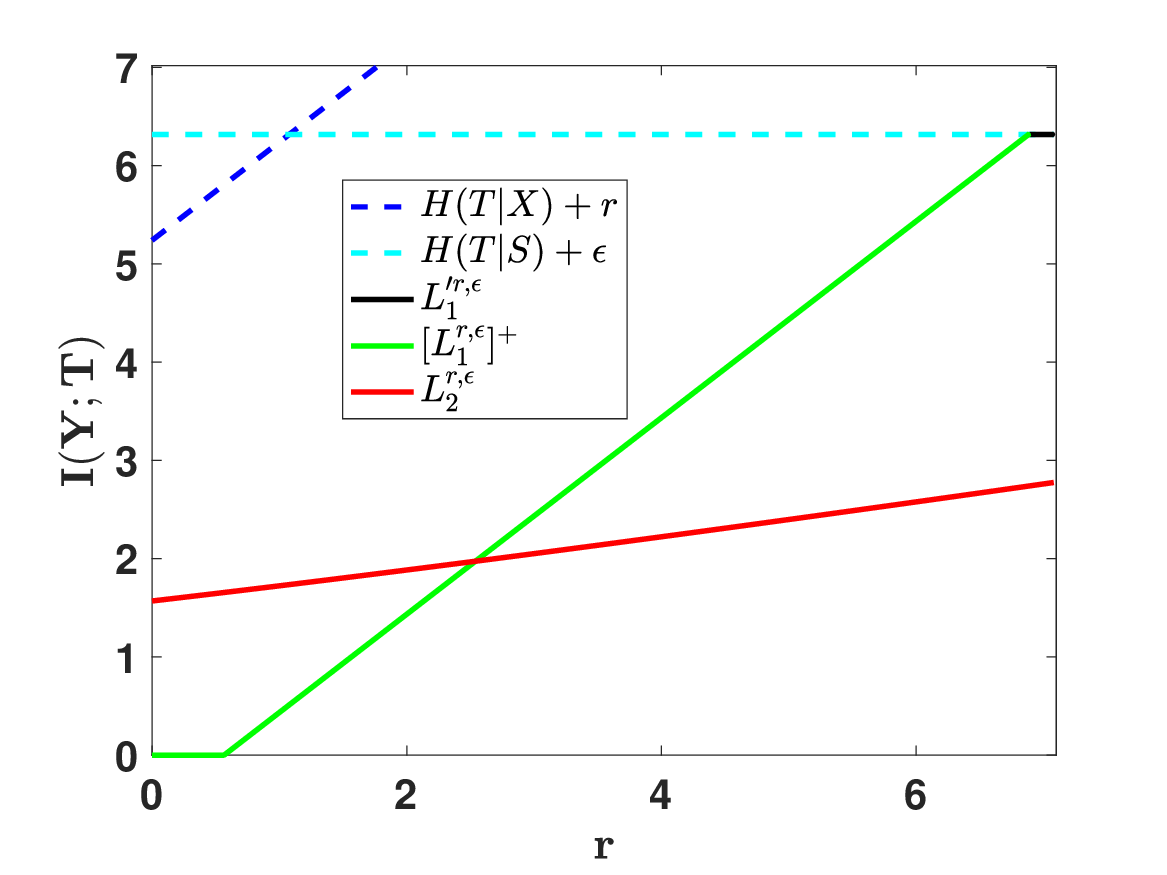}
    \caption{Comparison of the bounds in Example \ref{chooneabol}. Here, $[L_{1}^{r, \epsilon}]^+$ corresponds to $\max\{0, L_{1}^{r, \epsilon}\}$.}
    \vspace{-3mm}
    \label{compare5}
\end{figure}

\end{example}

\section{Conclusion}
In this paper, we studied the design of fair/private representations of a database under a compression rate constraint. We argued that, since perfect privacy or perfect demographic parity is not always feasible, we extended the fairness/privacy constraint to allow a bounded statistical parity/privacy leakage constraint. 
More precisely, we have considered a trade-off between utility $I(Y;T)$, the bounded statistical parity/privacy leakage constraint $I(Y;S)\leq \epsilon$, and the bounded compression rate condition $I(Y;X)\leq r$ where the goal is to maximize the utility under the two constraints.

To find the representations, we used randomization techniques on top of the Functional Representation Lemma and the Strong Functional Representation Lemma, which led to the lower bounds on the trade-off. This randomization approach has also been used to extend the FRL and SFRL. A benefit of using the FRL, SFRL, and their extended versions is that they are constructive and have low complexity. Therefore, the combination of this randomization technique with the lemmas leads to simple representation designs. We have shown that when the compression rate $r$ is greater than a threshold the proposed designs can be optimal. Furthermore, we have shown that using the bounded statistical parity constraint, when the compression rate is less than $\epsilon$, the new bounds can improve the previous bounds. Finally, we studied the bounds in a numerical example and showed that when the private data $S$ is a deterministic function of the target  $T$, the lower bounds meet the upper bound.
\section{Acknowledgment}
This work is supported in part by the Swedish Innovation Agency through the SweWIN center.
\section*{Appendix A}
\textit{Proof of Theorem \ref{th_new}:}
\label{Appendix_A}
The upper bound is achieved by removing one constraint from the trade-off at a time while keeping the other. To this end, we first omit the rate constraint. Therefore, we have
\begin{align}\label{first_upper}
    g^{r}_{\epsilon}(P_{S,X,T})
    \!\leq \!\!\!\!\!\sup_{\begin{array}{c} 
	\substack{P_{Y|S,X,T}:I(Y;S) \!\leq\! \epsilon}
	\end{array}}\!\!\!\!\!\!I(Y;T) \stackrel{(a)}{\leq} H(T|S) + \epsilon,
\end{align} 
where $(a)$ holds as
\begin{align*}
    I(Y ; T) &= H(T|S) + I(S; Y) - H(T|Y, S) - I(S; Y|T) \\&\leq H(T|S)+\epsilon.
\end{align*}
Next, by removing the bounded
statistical parity and/or privacy leakage constraint, we can write
\begin{align}\label{second_upper}
    g^{r}_{\epsilon}(P_{S,X,T})
    \!\leq\!\!\!\!\!\! \sup_{\begin{array}{c} 
	\substack{P_{Y|S,X,T}:I(X;Y) \leq r}
	\end{array}}\!\!\!\!\!\!\!\!I(Y;T) \!\stackrel{(a)}{\leq}\! H(T|X) + r,
\end{align} 
where $(a)$ holds as
\begin{align*}
    I(Y ; T) &= H(T|X) + I(X; Y) - H(T|Y, X) - I(X; Y|T) \\&\leq H(T|X) + r.
\end{align*}
 Finally, the upper bound \eqref{Upper_asli} follows from the combination of \eqref{first_upper} and \eqref{second_upper}.

To obtain $L_0$, we first note that it holds in all regimes. Using Lemma \ref{lemma2} considering $U \leftarrow Y$, $X \leftarrow (X, S)$ and $Y \leftarrow T$, we have 
	\begin{align*}
	I(T;Y)&=H(T|X,S)-I(X,S;Y|T)\\&\geq H(T|X,S)\!-\!\left( \log(I(X,S;T)+5.51)+1.06 \right)\\&=L_0.
	\end{align*} 

For the lower bound $ L_1^{r, \epsilon}$, we use Lemma \ref{lemma3} with
$U \leftarrow U$, $X \leftarrow S$, and $Y \leftarrow X$ such that 
\begin{align}
I(S;U)& = \epsilon,\label{ss}\\
H(X|U,S)&=0. \label{ss1}
\end{align}
Moreover, we have 
\begin{align}
    I(U;X) \leq H(X|S)+\epsilon,\label{tt}
\end{align}
where~\eqref{tt} holds since
\begin{align}
I(U;X)&\stackrel{(a)}{=}I(U;S)+H(X|S)-H(X|U,S)-I(S;U|X)\nonumber\\&\stackrel{(b)}{\leq} \epsilon+H(X|S).
\end{align}
where (a) is valid for any correlated $X$, $S$, and $U$ and (b) is due to \eqref{ss} and \eqref{ss1}. We now employ the randomization technique to construct $U'$ from $U$. 

Let
$
U'=\begin{cases}
U,\ \text{w.p}.\ {\alpha}\\
c,\ \ \text{w.p.}\ 1-{\alpha}
\end{cases},
$
\looseness=-1 where $c$ is a constant which does not belong to $\mathcal{X}\ \cup \ \mathcal{S} \ \cup \ \mathcal{U}$ and $\alpha=\frac{r}{H(X|S)+\epsilon}$. Note that we have $0 \leq \alpha \leq 1 $ in the given regime since $0 \leq r \leq H(X|S)+\epsilon$. Furthermore, assume $Y'$ be the output of Lemma \ref{lemma1} with $X\leftarrow (S,X,U')$ and $Y\leftarrow T$. Hence, we can write 
\begin{align}
I(Y';S,X,U')&=0,  \label{sss}\\
H(T|Y',S,X,U')&=0.\label{ssss}
\end{align}
Now, it is sufficient to let $Y=(U',Y')$. We first verify that the constraints of
$I(S;Y) \leq \epsilon$ and $I(X;Y)\leq r$
are satisfied. To this end,
%
$I(S;Y)=I(S;Y',U')\stackrel{(a)}{=}I(S;U')=\alpha I(S;U)= \alpha \epsilon \leq \epsilon,$ 
where (a) follows by \eqref{sss}. Moreover, we have
%
$I(X;Y)=I(X;Y',U')\stackrel{(a)}{=}I(X;U')=\alpha I(X;U) \stackrel{(b)}{\leq} \alpha (H(X|S)+\epsilon)=r,$
where (a) follows from \eqref{sss} and (b) follows from \eqref{tt}. Next, it follows from \eqref{key} that
\begin{align*}
I(T;Y)&=I(T;Y',U') 
\\&\stackrel{(a)}{=}H(T|X,S)\!+\!I(Y',U';X,S)\!-\!I(X,S;Y'\!,U'|T) 
\\&\stackrel{(b)}{=} H(T|X,S)+I(U';X,S)-I(X,S;Y',U'|T) 
 \\&\geq H(T|X,S)+I(U';X,S)-H(X,S|T)
\\&\stackrel{(c)}{=}H(T|X,S)+r-H(X,S|T)
\end{align*}
where we applied \eqref{ssss} and \eqref{sss} in (a) and (b), respectively. Furthermore, (c) follows as
\begin{align}\label{U_prime_X_S}
I(U';X,S) &= I(U';S)+I(U';X|S)\nonumber\\&=\alpha(I(U;S) + I(U;X|S))\nonumber\\&\stackrel{(a)}{=}\alpha (\epsilon+H(X|S))=r,
\end{align}
where we used \eqref{ss1} in (a). 
To derive $L_2^{r, \epsilon}$, we use the same $U'$ as in deriving $L_1^{r, \epsilon}$, but to produce $Y'$, we use Lemma \ref{lemma2} with $X\leftarrow (S,X)|U'$ and $Y\leftarrow T$. 
In this case, in addition to \eqref{ssss},
 the following bound for $Y'$ holds
\begin{align}
\label{first_bound_y_prime}
I(&Y';S,X|T,U') \nonumber\\
&\leq \log(I(X,S;T|U')+5.51)+1.06
 \nonumber\\&= \log((1-\alpha)I(X,S;T)+\alpha I(X,S;T|U)\!+\!5.51)\!+\!1.06
 \nonumber\\&\leq \log((1\!-\!\alpha)I(X,S;\!T)\!+\!\alpha\min\{H(X,S),H(T)\}\!+\!5.51\!)\! \nonumber\\&\quad+\!1.06.
\end{align}	
Furthermore, $I(X,S;Y',U'|T)$ can be bounded as
\vspace{-1mm}
\begin{align}
I(&X,S;Y',U'|T)\nonumber \\
&=I(X,S;U'|T)+I(X,S;Y'|T,U')\nonumber\\
&=\alpha I(X,S;U|T)+I(X,S;Y'|T,U') \nonumber \\
&\leq \alpha H(X,S|T)+\log((1\!-\!\alpha)I(X,S;\!T) \nonumber\\ & \quad +\!\alpha\min\{H(X,S),H(T)\}\!+\!5.51\!)\!+\!1.06,\label{t}
\end{align}
where \eqref{t} follows from \eqref{first_bound_y_prime}.
Now, using \eqref{key}, \eqref{ssss}, \eqref{U_prime_X_S}, and \eqref{t}, we have $I(Y;T) \geq L_2^{r, \epsilon}$.
To obtain ${L'}_1^{r, \epsilon}$ in the region $H(X|S)+\epsilon\leq r< H(X)$, note that
\begin{align*}
g^{r}_{\epsilon}(P_{S,X,T})&\geq g^{H(X|S)+\epsilon}_{\epsilon}(P_{S,X,T})\\&\geq L_1^{H(X|S)+\epsilon, \epsilon}\\&=H(T|X,S)+H(X|S)+\epsilon-H(X,S|T)\\&= H(T,X|S)-H(S,X|T)+\epsilon\\&= H(T|S)-H(S|T)+\epsilon.
\end{align*}

To derive the lower bound ${L}_3^{r, \epsilon}$, we use Lemma \ref{lemma3} with $U \leftarrow U$, $X \leftarrow X$ and $Y \leftarrow S$ such that
\begin{align}
I(U;X)& = r,\label{nn1}\\
H(S|U,X)&=0. \label{nn2}
\end{align}
Moreover, we have 
\begin{align}
    I(U;S)\leq H(S|X)+r,\label{nn3}
\end{align}
where \eqref{nn3} is valid since 
\begin{align}\label{dual_bound}
I(U;S)&\stackrel{(a)}{=}I(U;X)+H(S|X)-H(S|U,X)-I(X;U|S)\nonumber\\&\stackrel{(b)}{\leq} r+H(S|X).
\end{align}
where (a) holds for any correlated $X$, $S$, and $U$ and (b) is due to \eqref{nn1} and \eqref{nn2}. We now employ the randomization technique to construct $U'$ from $U$. Let 
$
U'=\begin{cases}
U,\ \text{w.p}.\ \tilde{\alpha}\\
\tilde{c},\ \ \text{w.p.}\ 1-\tilde{\alpha}
\end{cases},
$
\looseness=-1 where $\tilde{c}$ is a constant which does not belong to $\mathcal{X} \cup \mathcal{S} \cup \mathcal{U}$ and $\tilde{\alpha}=\frac{\epsilon}{H(X|S)+r}$. It is worth mentioning that we have $0 \leq \tilde{\alpha} \leq 1 $ in the given regime since $r \leq  \epsilon \leq H(S|X)+r$. Furthermore, assume $Y'$ is constructed as in deriving $L_{1}^{r, \epsilon}$. Therefore, \eqref{sss} and \eqref{ssss} hold. 
‌Now, it is sufficient to let $Y=(U',Y')$. First, we show the validity of the constraints
$I(S;Y) \leq \epsilon$ and $I(X;Y)\leq r$. To this end,
%
$I(S;Y)=I(S;Y',U')\stackrel{(a)}{=}I(S;U')= \tilde{\alpha} I(S;U) \stackrel{(b)}{\leq} \tilde{\alpha} (H(S|X)+r) \leq \epsilon,$ 
where (a) and (b) follow from \eqref{sss} and \eqref{dual_bound}, respectively. Moreover, we have
%
$I(X;Y)=I(X;Y',U')\stackrel{(a)}{=}I(X;U')=\tilde{\alpha} I(X;U)\stackrel{(b)}{=} \tilde{\alpha} r \leq r,$
where (a) follows from \eqref{sss} and (b) follows from \eqref{nn1}. Next, it follows from \eqref{key} that
\begin{align*}
I(T;Y)&=I(T;Y',U') 
\\&\stackrel{(a)}{=}H(T|X,S)\!+\!I(Y',U';X,S)\!-\!I(X,S;Y'\!,U'|T) 
\\&\stackrel{(b)}{=} H(T|X,S)+I(U';X,S)-I(X,S;Y',U'|T) 
 \\&\geq H(T|X,S)+I(U';X,S)-H(X,S|T)
\\&\stackrel{(c)}{=}H(T|X,S)+\epsilon-H(X,S|T)
\end{align*}
where we applied \eqref{ssss} and \eqref{sss} in (a) and (b), respectively. Moreover, (c) follows as
\begin{align}\label{U_prime_X_S_new}
&I(U';X,S) = I(U';X)+I(U';S|X)=\tilde{\alpha}I(U;X) \nonumber\\&+\tilde{\alpha}I(U;S|X)=\tilde{\alpha} (r+H(S|X))=\epsilon,
\end{align}
where we used \eqref{nn2}.
To derive $L_4^{r, \epsilon}$, we use the same $U'$ as in deriving $L_3^{r, \epsilon}$, but to produce $Y'$, we use Lemma \ref{lemma2} with $X\leftarrow (S,X)|U'$ and $Y\leftarrow T$. 
In this case, in addition to \eqref{ssss},
 the following bound for $Y'$ holds
\begin{align}
\label{new_bound_y_prime_}
I(&Y';S,X|T,U')
\leq \log(I(X,S;T|U')+5.51)+1.06
 \nonumber\\&= \log((1-\tilde{\alpha})I(X,S;T)+\tilde{\alpha} I(X,S;T|U)\!+\!5.51)\!+\!1.06
 \nonumber\\&\leq \log((1\!-\!\tilde{\alpha})I(X,S;\!T)\!+\!\tilde{\alpha}\min\{H(X,S),H(T)\}\!+\!5.51\!)\! \nonumber\\&\quad+\!1.06.
\end{align}	
Furthermore, $I(X,S;Y',U'|T)$ can be bounded as
\begin{align}
I(&X,S;Y',U'|T)=I(X,S;U'|T)+I(X,S;Y'|T,U')\nonumber\\
&=\tilde{\alpha} I(X,S;U|T)+I(X,S;Y'|T,U') \nonumber \\
&\leq \tilde{\alpha} H(X,S|T)+\log((1\!-\!\tilde{\alpha})I(X,S;\!T) \nonumber\\ & \quad +\!\tilde{\alpha}\min\{H(X,S),H(T)\}\!+\!5.51\!)\!+\!1.06,\label{t_new}
\end{align}
where \eqref{t_new} follows from \eqref{new_bound_y_prime_}.
Now, using \eqref{key}, \eqref{ssss}, \eqref{U_prime_X_S}, and \eqref{t_new}, we have $g^{r}_{\epsilon} \geq L_4^{r, \epsilon}$.
Finally, to obtain ${L'}_3^{r, \epsilon}$ in the region $H(S|X)+r \leq \epsilon < H(S)$, note that
\begin{align*}
&g^{r}_{\epsilon}(P_{S,X,T})\geq g^{r}_{H(S|X)+r}(P_{S,X,T})\geq L_3^{r, H(S|X)+r}\\&=H(T|X,S)+H(S|X)+\epsilon-H(X,S|T)= H(T,S|X)\\&-H(X,S|T)+r= H(T|X)-H(X|T)+r.
\end{align*}
\bibliographystyle{IEEEtran}
{\balance \bibliography{IEEEabrv,Wiopt2025}}
\end{document}